\newtheorem{theorem}{Theorem}[section]
\newtheorem{lemma}[theorem]{Lemma}
\theoremstyle{definition}
\theoremstyle{remark}
\newtheorem{remark}[theorem]{Remark}
\begin{document}
\title{Gradient-free ensemble transform methods for generalized Bayesian inference in generative models}

\author{Diksha Bhandari$^{1*}$, Sebastian Reich$^1$}
\thanks{$^{*}$Corresponding author: Diksha Bhandari}
\email{diksha.bhandari@uni-potsdam.de}
\email{sebastian.reich@uni-potsdam.de}
\address{$^1$University of Potsdam,
Institute of Mathematics, Karl-Liebknecht Str. 24/25, D-14476 Potsdam, Germany}
\thanks{The research has been funded by the Deutsche Forschungsgemeinschaft (DFG)- Project-ID 318763901 - SFB1294.}

\subjclass[2020]{62F15, 65C05, 62-08}


\keywords{Generalized Bayesian inference, gradient-free, affine invariance, Langevin dynamics, interacting particle system.}

\begin{abstract}
Bayesian inference in complex generative models is often obstructed by the absence of tractable likelihoods and the infeasibility of computing gradients of high-dimensional simulators. Existing likelihood-free methods for generalized Bayesian inference typically rely on gradient-based optimization or reparameterization, which can be computationally expensive and often inapplicable to black-box simulators. To overcome these limitations, we introduce a gradient-free ensemble transform Langevin dynamics method for generalized Bayesian inference using the maximum mean discrepancy. By relying on ensemble-based covariance structures rather than simulator derivatives, the proposed method enables robust posterior approximation without requiring access to gradients of the forward model, making it applicable to a broader class of likelihood-free models. The method is affine invariant, computationally efficient, and robust to model misspecification. Through numerical experiments on well-specified chaotic dynamical systems, and misspecified generative models with contaminated data, we demonstrate that the proposed method achieves comparable or improved accuracy relative to existing gradient-based methods, while substantially reducing computational cost. 
\end{abstract}

\maketitle

\section{Introduction}
\label{sec:lfi_intro}
In modern machine learning and statistical inference, many generative models of interest are defined only implicitly through simulators rather than through explicit probability densities. For such models, it is typically straightforward to generate independent and identically distributed (i.i.d.) samples from the distribution \( P_{\theta} \) for any given parameter value \( \theta \in \Theta \), but the corresponding likelihood function is not available in closed form. This lack of a tractable likelihood renders classical inference techniques, such as maximum likelihood estimation or Bayesian inference via posterior sampling difficult or even infeasible to apply directly.

To address this challenge, a growing body of work has focused on likelihood-free or simulator-based inference methods, which replace likelihood evaluations with comparisons between simulated and observed samples. Within this family, methods based on the maximum mean discrepancy (MMD) have emerged as particularly powerful. The MMD is a kernel-based statistical distance that quantifies discrepancies between probability distributions and can be estimated efficiently using only samples. When employed in a frequentist setting, the MMD forms the basis of minimum distance estimators that are consistent, asymptotically normal, and robust to model misspecification \citep{briol_statistical_2019}. Recently, MMD-based approaches have also been introduced in the Bayesian framework through the MMD-Bayes posterior, a scoring rule-based formulation that enables Bayesian inference without the need for likelihood evaluations \citep{cherief-abdellatif_mmd-bayes_2020, pacchiardi_generalized_2024}. Existing work has primarily relied on gradient-based techniques, such as stochastic gradient Langevin dynamics (SGLD), to explore this posterior \citep{pacchiardi_generalized_2024}. However, these methods require access to gradients of the simulator or rely on reparameterization techniques, which are often unavailable or impractical for complex, real-world models.

In this paper, we build on two complementary strands of literature: (i) generalized, loss-based Bayesian inference, and (ii) ensemble Kalman filter techniques for sampling. In particular, affine-invariant methods such as the affine invariant interacting Langevin dynamics (ALDI) \citep{garbuno-inigo_affine_2020} and the ensemble Kalman sampler (EKS) \citep{garbuno-inigo_interacting_2020, ding_ensemble_2021} have been widely studied as interacting particle systems that approximate posterior distributions. However, most existing formulations of these methods are restricted to standard \( \ell_2 \)-based loss functions.
We extend this line of work by introducing a novel, \emph{gradient-free ensemble transform Langevin dynamics} method for generalized Bayesian inference using the MMD loss. The proposed framework targets the MMD-Bayes posterior while avoiding explicit likelihood gradients, thereby enabling robust inference for a wide range of intractable or misspecified models. Our approach relies on interacting particle systems driven by Langevin-type flows that approximate the desired posterior through a combination of sample-based updates and ensemble interactions. Furthermore, while many loss-based inference methods are sensitive to affine transformations of the model parameters \citep{matsubara_robust_2022}, our proposed method is affine invariant, thereby achieving greater stability and interpretability across reparameterizations of \( \theta \).

\subsection{Loss-based generalized Bayesian inference}
\label{sec:gen_bayes}
In classical Bayesian inference, the likelihood function represents all the information that the observed data provides about the unknown parameters \( \theta \). It formalizes the assumed data-generating process and serves as a central component in updating prior beliefs to form the posterior distribution via Bayes' rule,
\begin{equation}
\pi_{\text{post}}(\theta) \propto \pi_{\text{prior}}(\theta) \, \pi(y \mid \theta),
\label{eqn:bayes3}
\end{equation}
where \( \pi_{\text{prior}}(\theta) \) represents prior knowledge about the parameters, and \( \pi(y \mid \theta) \) denotes the likelihood describing the probabilistic relationship between the model output \( G(\theta) \) and the observed data \( y \). Consider a family of probability distributions
\[
\mathcal{P}_{\Theta}(\mathcal{X})
=
\left\{
P_{\theta} \text{ defined on } \mathcal{X} \subset \mathbb{R}^N
\;:\;
\theta \in \Theta \subset \mathbb{R}^D
\right\}.
\]
In the \emph{well-specified} case, the true data-generating distribution \( P_{\text{data}} \) belongs to this family, meaning that there exists a true parameter value \( \theta^* \in \Theta \) such that \( P_{\theta^*} = P_{\text{data}} \). If the model admits a tractable density \( \pi(\cdot, \theta) \), the maximum likelihood estimator (MLE) can be obtained as
\[
\theta_{\text{MLE}}^* = \arg\max_{\theta \in \Theta} \, \frac{1}{N} \sum_{n=1}^N \log \pi(y^n \mid \theta),
\]
for i.i.d. samples \( \{y^n\}_{n=1}^N \sim P_{\text{data}} \). In this setting, as the number of observations grows, the Bayesian posterior converges almost surely to a multivariate normal distribution centered at the true parameter \( \theta^* \), with covariance matching the asymptotic sampling covariance of the MLE, as described by the Bernstein–von Mises theorem \citep{ghosh_introduction_2006}. In practice, however, many realistic models, particularly simulator-based or likelihood-free models do not admit a closed form or differentiable likelihood. Moreover, the assumption that \( P_{\text{data}} \) lies within the parametric family \( \mathcal{P}_{\Theta}(\mathcal{X}) \) often fails. Approximate forward models, contaminated or heterogeneous data, and structural model errors can lead to \emph{model misspecification}. In such cases, there is no true parameter value, and the Bayesian posterior asymptotically concentrates around the parameter minimizing the Kullback–Leibler (KL) divergence to the true data distribution,
\[
\theta^*_{\text{KL}} = \arg\min_{\theta} \, \mathrm{KL}(P_{\theta}, P_{\text{data}}).
\]
While this provides a principled asymptotic limit, the resulting posterior can exhibit poor robustness properties, particularly in the presence of outliers or heavy tailed noise, since the KL divergence heavily penalizes tail discrepancies.

To address these challenges, \emph{loss-based generalized Bayesian inference} provides a broader and more flexible framework. Instead of requiring a likelihood function, it defines the posterior distribution through an alternative loss or scoring rule that quantifies the discrepancy between observed data and model predictions. Formally, the generalized Bayesian posterior \citep{bissiri_general_2016} is given by 
\begin{equation}\label{eqn:gen_bayes_posterior}
\pi_{\text{post}}(\theta \mid y) \propto \pi_{\text{prior}}(\theta) \, \exp\!\big(-w \, l(y, \theta)\big),
\end{equation}
where \( l(y, \theta) \) is a chosen loss function, and \( w \) is a scaling parameter  balancing the influence of prior information and observed data. Under suitable regularity conditions on \( l \), the generalized posterior satisfies a Bernstein–von Mises result, remaining asymptotically normal and centered at the minimizer of the loss function as the sample size increases \citep{miller_asymptotic_2021, pacchiardi_generalized_2024}.

The flexibility of this formulation lies in the choice of the loss function \( l \), which can be tailored to the problem at hand. Common examples include robust scoring rules such as the energy score, kernel score \citep{gneiting_strictly_2007}, or maximum mean discrepancy, each of which enhances robustness to model misspecification and outliers. When \( l(y, \theta) = -\log \pi(y \mid \theta) \) and \( w = 1 \), the generalized posterior \eqref{eqn:gen_bayes_posterior} reduces to the standard Bayesian posterior in \eqref{eqn:bayes3}, corresponding to the KL divergence. Choosing \( w \neq 1 \) with the same log-loss results in the fractional posterior \citep{holmes_assigning_2017, bhattacharya_bayesian_2019}. Overall, loss-based generalized Bayesian inference extends the Bayesian paradigm to complex settings where likelihoods are unavailable, or infeasible to evaluate, offering a principled and robust alternative for modern likelihood-free inference.

\subsection{Related work}
Kernel-based methods have become increasingly popular in statistical inference due to their nonparametric flexibility and strong theoretical guarantees. Among these, the maximum mean discrepancy has emerged as a particularly powerful tool for comparing probability distributions via reproducing kernel Hilbert space (RKHS) embeddings \citep{gretton_kernel_2006}. Over the past decade, MMD has been widely adopted in likelihood-free inference, inspiring both frequentist estimators and Bayesian formulations. From a frequentist perspective, \citet{briol_statistical_2019} provided a rigorous analysis of the statistical properties of minimum-MMD estimators, establishing results on consistency and asymptotic normality. They further introduced a minimum-distance estimation framework, where parameters are estimated by minimizing the MMD between simulated model outputs and observed data. Their work also proposed a natural gradient descent-type algorithm for efficiently computing these estimators, making MMD a practical and theoretically grounded approach for inference without explicit likelihoods.

Building upon these ideas, \citet{cherief-abdellatif_mmd-bayes_2020} proposed a Bayesian analogue known as MMD-Bayes, in which the classical likelihood term in Bayes' rule is replaced by a kernel-based pseudo-likelihood. This formulation enables Bayesian inference in a wide variety of generative models where likelihoods are unavailable. One of the main challenges in MMD-Bayes is that the resulting posterior is analytically intractable. To overcome this, \citet{cherief-abdellatif_mmd-bayes_2020} developed a variational inference approach, along with a stochastic gradient algorithm to approximate the MMD-Bayes posterior. More recently, \citet{pacchiardi_generalized_2024} extended the MMD-Bayes framework by placing it within the broader context of generalized Bayesian inference based on scoring rules. Their formulation employs adaptive stochastic gradient Langevin dynamics and pseudo-MCMC schemes \citep{nemeth_stochastic_2021, jones_adaptive_2011, andrieu_pseudo-marginal_2009} to sample from the resulting posteriors. While effective, these methods depend on kernel gradients estimated from finite samples and require extensive simulation at each iteration for a proposed sample, which can be computationally demanding. The generalized Bayes formulation followed in \citep{pacchiardi_generalized_2024} builds on the foundational work of \citet{bissiri_general_2016} and \citet{jewson_principles_2018}, in which the likelihood is replaced by an arbitrary loss or scoring rule that measures the discrepancy between model predictions and observed data. Within this framework, the MMD-Bayes method naturally corresponds to choosing the kernel score as the scoring rule. Related approaches such as Bayesian Synthetic Likelihood (BSL) \citep{price_bayesian_2018} can also be interpreted within the same generalized Bayesian formulation. A complementary line of work has explored alternative loss-based Bayesian formulations using other kernel discrepancies. For example, \citet{matsubara_robust_2022} investigated generalized Bayesian inference based on the kernel Stein discrepancy \citep{chwialkowski_kernel_2016, liu_kernelized_2016}, showing that the resulting posterior distributions enjoy both consistency and robustness properties. Unlike our focus on fully likelihood-free models, their approach applies to doubly intractable models, where the likelihood is known only up to a normalizing constant. Additionally, \citet{dellaporta_robust_2022} proposed a nonparametric approach to robust Bayesian inference by introducing an MMD posterior bootstrap, which combines the posterior bootstrap with MMD-based estimators to achieve robustness against model misspecification.

Several other methods have been developed to perform Bayesian inference without explicit likelihood evaluations. Approximate Bayesian computation (ABC) \citep{del_moral_adaptive_2012, lintusaari_fundamentals_2017, park_k2-abc_2016} compares observed and simulated summary statistics directly, but its accuracy strongly depends on the choice of summaries and distance metrics. More recent developments include neural approaches such as neural density ratio estimation \citep{thomas_likelihood-free_2022} and variational methods based on synthetic likelihoods, both of which aim to scale likelihood-free inference to complex, high-dimensional models.

\subsection{Outline and contributions of this work}
This paper introduces a novel gradient-free, affine-invariant ensemble transform method for generalized Bayesian inference that employs the maximum mean discrepancy (MMD) as a loss function in likelihood-free settings. The proposed approach builds upon the principles of ensemble-based Langevin dynamics to enable efficient and robust posterior approximation without requiring likelihood evaluations or simulator gradients. By conducting extensive numerical experiments on a range of generative models, we show that the method delivers robust and accurate inference not only in well-specified scenarios but also under misspecification, such as when observations are contaminated or when dealing with chaotic dynamical systems. Furthermore, systematic comparisons with standard Bayesian inference and gradient-based Langevin dynamics show that our method attains comparable, and in many cases superior, estimation accuracy, while offering improved robustness to model misspecification.

The remainder of this paper is organized as follows.  
Section~\ref{sec:definitions} introduces the definition of the MMD and its empirical estimators, along with the formulation of the minimum-MMD estimator.  
In Section~\ref{sec:lfi}, we describe the use of MMD within the framework of generalized Bayesian inference and present the resulting MMD-Bayes formulation for likelihood-free inference.  
Section~\ref{sec:etld_methods} develops the proposed gradient-free gradient-free ensemble transform Langevin dynamics method in detail, including its derivation, affine-invariance properties, and practical implementation.  
Finally, Section~\ref{sec:experiments} provides empirical evaluations across various benchmark problems and compares the performance of the propose method with alternative inference approaches.

\section{Framework and definitions}
\label{sec:definitions}
\subsection{Notation}
We denote by \(\mathcal{X} \subset \mathbb{R}^N\) the data space and by \(\Theta \subset \mathbb{R}^D\) the parameter space. Let \(\mathcal{P}(\mathcal{X})\) denote the set of Borel probability measures on \(\mathcal{X}\). A generic probability measure is denoted by \(Q \in \mathcal{P}(\mathcal{X})\), and its density with respect to Lebesgue measure is denoted by \(\pi_Q(\cdot)\), when it exists.

We consider a family of generative models indexed by parameter \(\theta \in \Theta\), defined via a family of probability measures 
$ \mathcal{P}_{\Theta}(\mathcal{X}) = \{P_\theta \in \mathcal{P}(\mathcal{X}) : \theta \in \Theta\}.$
When \(P_\theta\) has a density with respect to the Lebesgue measure, we denote it as \(\pi(\cdot \mid \theta)\).

The true data-generating distribution is denoted by \(P_{\text{data}}\), and its density (if available) by \(\pi_{\text{data}}(\cdot)\). Even if the likelihood is not accessible, we can sample i.i.d. realizations $\{y^n\}_{n=1}^N \sim P_{\text{data}}$. 

\subsection{Generative model setting}
Let \((\mathcal{U}, \mathcal{F}, \mathbb{U})\) be a probability space that serves as a latent source of randomness. A generative model $\mathcal{P}_{\Theta}(\mathcal{X})$ is defined as a family of probability measures parameterized by $\Theta$, such that, for any $\theta \in \Theta$, we can sample i.i.d. realizations $\left\{x^j\right\}_{j=1}^J \subset \mathcal{X}$ from $P_\theta$. These realizations are obtained via a function \(G_\theta: \mathcal{U} \to \mathcal{X},
\) called the \emph{generator} or \emph{simulator}, which maps latent samples to data space. Thus, the probability distribution induced by \(G_\theta\) on \(\mathcal{X}\) is the pushforward measure \(P_\theta = G_{\theta\#}\mathbb{U}\). 
To generate \(n\) independent samples from the model, we first draw i.i.d. samples \(\{u^j\}_{j=1}^J \sim \mathbb{U}\), and apply the generator to each of these samples:
$$
x^j = G_\theta(u^j), j = 1, \dots, J.
$$
This formulation enables simulation from the generative model even when the likelihood \(\pi(x \mid \theta)\) is intractable or does not exist. Our goal is to identify a parameter \(\theta^*\) such that the corresponding model \(P_{\theta^*}\) is closest to \(P_{\text{data}}\), in some sense of ‘closeness' defined by a loss function or statistical criterion. 

Further, to set the foundation for our likelihood-free inference framework, we discuss the key mathematical concepts that underpin our methodology. Specifically, we define the maximum mean discrepancy, a widely used kernel-based divergence measure for comparing probability distributions, along with its empirical formulations. We also present the minimum MMD estimator, which identifies parameters that minimize the discrepancy between model-generated and observed data distributions. These tools form the basis for the generalized Bayesian framework studied in the subsequent sections.

\subsection{Maximum mean discrepancy}
\label{sec:mmd}
First introduced in \citep{gretton_kernel_2006}, the maximum mean discrepancy is a popular kernel-based statistical measure used to compare probability distributions. In the context of likelihood-free inference, MMD provides a powerful tool to quantify the discrepancy between a model generated distribution and observed data. 

Let $\mathcal{H}_k$ be a reproducing kernel Hilbert space (RKHS) with associated positive-definite and symmetric kernel $k: \mathcal{X} \times \mathcal{X} \rightarrow \mathbb{R}$. Let $\mathcal{F}$ be a class of functions $f : \mathcal{X} \rightarrow \mathbb{R}$, then the maximum mean discrepancy between any two probability distributions measures \(P, Q \in \mathcal{P}(\mathcal{X})\) relative to $\mathcal{F}$ is defined as
\begin{equation}
    \text{MMD}_{\mathcal{F}}(P, Q) := \sup_{f \in \mathcal{F}} \left[ \mathbb{E}_{x \sim P}[f(x)] - \mathbb{E}_{y \sim Q}[f(y)] \right].
\end{equation}
In general, the MMD is a pseudo-metric for probability distributions (i.e., symmetric, satisfies the triangle inequality and $\text{MMD}_{\mathcal{F}}(P,P) = 0$ \citep{briol_statistical_2019}). When $\mathcal{F}$ is the unit ball $\mathcal{F} := \{f \in \mathcal{H}_k : \|f\|_{\mathcal{H}_k} \leq 1 \}$ in $\mathcal{H}_k$, it allows to avoid computing the supremum explicitly and is convenient to compute. Due to the reproducing properties $(f(x) = \langle f, k(x, \cdot) \rangle_{\mathcal{H}_k})$ of the RKHS, the MMD can be expressed as the distance between the kernel mean embeddings of the two distributions as
\begin{equation}\label{eqn:mmd}
    \operatorname{MMD}(P, Q) = \left\| \mu_P - \mu_Q \right\|_{\mathcal{H}_k},
\end{equation}
where \(\mu_P := \mathbb{E}_{x \sim P}[k(x, \cdot)]\) and \(\mu_Q := \mathbb{E}_{y \sim Q}[k(y, \cdot)]\) are the kernel mean embeddings of \(P\) and \(Q\) in $\mathcal{H}_k$, respectively. 
It has been shown that on $\mathbb{R}^D$, Gaussian and Laplace kernels are characteristic, so the MMD for these two kernels is a metric for distributions on $\mathbb{R}^D$ \citep{gretton_kernel_2012}. Due to these properties, in this work, we will focus on the Gaussian kernel for MMD 
$$k(x,y) = \exp{\left(-\frac{\left\| x-y \right\|_2^2}{2 \gamma^2} \right)},$$
where $\gamma$ is scalar bandwidth. The squared-MMD is particularly useful in applications, due to its simple expression given by \citep{gretton_kernel_2012}
\begin{align}
    \operatorname{MMD}^2(P, Q)
    &= \left\| \mu_P - \mu_Q \right\|_{\mathcal{H}_k}^2 \notag \\
    & =
\mathbb{E}_{x, x' \sim P}\left[k(x, x')\right]
+
\mathbb{E}_{y, y' \sim Q}\left[k(y, y')\right]
-
2\, \mathbb{E}_{x \sim P, y \sim Q}\left[k(x, y)\right].\label{eqn:mmd_squared}
\end{align}
In practice, we do not have direct access to the distributions \( P \) or \( Q \), but only to samples from them.
Given i.i.d. samples \(\{x^j\}_{j=1}^J \sim P_\theta\) simulated from the model, and \(\{y^n\}_{n=1}^N \sim P_{\text{data}}\) drawn from the true data distribution, we use the following unbiased empirical estimate of \(\mathrm{MMD}^2(P_\theta, P_{\text{data}})\):
\begin{align} \label{eqn:mmd2_empirical}
\widehat{\operatorname{MMD}}^2(P_\theta, P_{\text{data}})
&= \mathbb{E}_{x, x' \sim P_{\theta}}\left[k(x, x')\right]
+
\mathbb{E}_{y, y' \sim P_{\text{data}}}\left[k(y, y')\right] - 2\, \mathbb{E}_{x \sim P_{\theta}, y \sim P_{\text{data}}}\left[k(x, y)\right] \notag \\ 
& = \frac{1}{J(J-1)} \sum_{\substack{j, j' = 1 \\ j \neq j'}}^J k(x^j, x^{j'})
+
\frac{1}{N(N-1)} \sum_{\substack{n, n' = 1 \\ n \neq n'}}^N k(y^n, y^{n'}) \notag \\
&\quad
-
\frac{2}{JN} \sum_{j=1}^J \sum_{n=1}^N k(x^j, y^n).
\end{align}

\subsection{Minimum MMD Estimators}
\label{sec:min_mmd_estimator}
Given a parameter set $\theta \in \Theta$, the minimum MMD estimator, as defined in \citep{briol_statistical_2019}, is the parameter value that minimizes the squared MMD between a parameterized family of models \( P_{\theta} \) and unknown data generating distribution \( P_{\text{data}} \):
\begin{equation}
    \hat{\theta}_{\text{MMD}} := \arg\min_{\theta \in \Theta} \, \operatorname{MMD}^2(P_{\theta}, P_{\text{data}}).
\end{equation}
for $\{y^n\}_{n=1}^N \sim P_{\text{data}}$ i.i.d. In the next section, we focus on a Bayesian framework for likelihood-free inference based on the MMD estimator.

\section{Likelihood-free inference using maximum mean discrepancy}
\label{sec:lfi}
Standard Bayesian inference traditionally requires access to a likelihood function \( \pi(\cdot \mid \theta) \) to update prior beliefs about a parameter \( \theta \). However, in many complex models, particularly those defined by simulators, the likelihood is either unavailable or computationally intractable. To address such challenges, generalized Bayesian inference provides a flexible alternative by allowing updates to the prior using alternative data-fit objectives, such as general loss functions or proper scoring rules \citep{bissiri_general_2016, pacchiardi_generalized_2024}. Under the assumed likelihood-free setting, given a loss function \( l(y, \theta) \), the generalized Bayesian posterior is defined as 
\begin{equation}\label{eqn:gen_bayes}
\pi_{\text{post}}(\theta \mid y) \propto \pi_{\text{prior}}(\theta) \exp\big(-w ~ l(y, \theta)\big),
\end{equation}
where \( w > 0 \) is a learning rate. More broadly, it allows Bayesian updating even in the absence of a likelihood function by using alternative discrepancy measures. 

We often have access to a generator \( G_{\theta}(\cdot) \), which defines a parameterized model distribution \( P_{\theta} \), from which we can easily simulate. Hence a suitable choice of $l(y, \theta)$ would be for which estimators can be obtained with the generated samples. A particularly useful class of losses arises from distributional discrepancy measures. One such measure is the maximum mean discrepancy, defined in Section \ref{sec:mmd}, which quantifies the difference between two distributions using RKHS embeddings. As shown in \citep{cherief-abdellatif_mmd-bayes_2020}, the MMD-Bayes posterior inherits desirable properties: it is consistent and robust to model misspecification, even in cases where the standard (likelihood-based) posterior would fail to concentrate on the true parameters. Thus, even under moderate model misspecification, minimizing the MMD can recover the model distribution closest to the data-generating process, resulting in robust inference. Given an unknown true data generating distribution \( P_{\text{data}} \), and a parameterized model distribution \( P_{\theta} \), then the squared MMD between $P_{\theta}$ and $ P_{\text{data}}$ is given by \eqref{eqn:mmd2_empirical}, for a positive-definite kernel $k$.
Substituting the $\mathrm{MMD}^2$ into the generalized Bayes formula \eqref{eqn:gen_bayes} results in the following MMD-Bayes posterior
\begin{equation}\label{eqn:mmd_bayes}
\pi_{\text{MMD}}(\theta \mid Y) 
\propto \pi_{\text{prior}}(\theta) \exp\left(-\beta ~ \mathrm{MMD}^2(P_\theta, P_{\text{data}})\right),
\end{equation}
where $Y=\{y^n\}_{n=1}^N \sim  P_{\text{data}}$, and \( \beta > 0\) is a temperature parameter that adjusts the influence of the data-fit term relative to the prior. This posterior concentrates around parameter values for which the generated distribution \( P_\theta \) is closest to the observed distribution \( P_{\text{data}} \) in MMD. This framework is especially appealing in likelihood-free inference, where models are implicitly defined via simulators rather than explicit likelihoods. Since MMD-Bayes only requires the ability to simulate from \( P_\theta \) and to evaluate a kernel function \( k \), it is broadly applicable in practice. 
Note that the choice of $\beta$ in MMD-Bayes posterior \eqref{eqn:mmd_bayes} represents the amount of the influence of the observations relative to the prior. For a standard Bayes update, in a well-specified case, $\beta$ is fixed to $1$ \citep{zellner_optimal_1988}. For misspecified models, several works \citep{grunwald_inconsistency_2017, holmes_assigning_2017} have investigated the use of $\beta \neq 1$, with various criteria proposed for its selection \citep{bissiri_general_2016, lyddon_general_2019, matsubara_robust_2022}. In this work, we adopt the technique of \citep{bissiri_general_2016}, also implemented in \citep{pacchiardi_generalized_2024}, which is computationally tractable and does not require samples from the exact posterior. We do not pursue an optimal $\beta$ selection here, noting that it remains an active research area best addressed in a problem specific manner.
\begin{remark}
The work by \citep{pacchiardi_generalized_2024} proposes using a sample-based scoring rule, denoted by \( S(P_\theta, y) \), as the loss function \( l(y, \theta) \) in the generalized Bayes framework \eqref{eqn:gen_bayes}. To perform inference, they employ MCMC techniques targeting the resulting scoring rule posterior. Since exact sampling from the scoring rule posterior is not possible, they utilize a pseudo-marginal Monte Carlo Markov chain (PM-MCMC) approach, where for each proposed \( \theta^\dagger \), simulations from $P_{\theta^{\dagger}}$ are used to sample from a close approximation. However, reducing the error of this approximation requires an increasing number of simulations per MCMC step, resulting in significant computational cost. This limits the practicality of the method in complex models where simulations are expensive.
\end{remark}
As highlighted before, the framework of loss-based generalized Bayesian inference is not limited to choice of MMD as a discrepancy measure, but can be extended to a more general framework with scoring rules. Setting \( l(y, \theta) \) as a proper scoring rule \( S(P_\theta, y) \), one can obtain a scoring rule posterior $\pi_{\text{S}}(\theta \mid y) $, introduced in \citep{pacchiardi_generalized_2024}. In particular, inserting the kernel score in \eqref{eqn:gen_bayes}, recovers the MMD-Bayes posterior. For detailed proofs and discussion on the asymptotic normality, finite sample generalization bound, and a quantitative bound on the robustness to outliers of the kernel score posterior (and thus MMD-Bayes posterior), we refer the reader to \citep{pacchiardi_statistical_2022}. 

In the following section, we propose a novel gradient-free method for sampling from the MMD-Bayes posterior \eqref{eqn:mmd_bayes}, based on ensemble transform Langevin dynamics. By leveraging the ensemble structure, this approach enables efficient posterior exploration without relying on explicit gradient information or incurring the high computational cost typically associated with large simulation budgets. 

\section{Ensemble transform methods for likelihood-free inference}
\label{sec:etld_methods}
Recent advances in likelihood-free inference have explored gradient-based Monte Carlo and variational techniques for sampling from the MMD-Bayes posterior $\pi_{\text{MMD}}$ in scenarios where the likelihood is intractable \citep{pacchiardi_generalized_2024, cherief-abdellatif_mmd-bayes_2020}. In this work, we build on these developments by introducing an ensemble-based approach rooted in overdamped Langevin dynamics. For time $s >0$, the central idea is to evolve an ensemble of particles $\{\theta^{m}_s\}_{m=1}^M$ to approximate the posterior as $s \rightarrow \infty$, using information derived entirely from simulation outputs and without requiring gradient access to the forward model $G_{\theta_s}(\cdot)$. At time $s=0$, let $\{\theta^{m}_0\}_{m=1}^M$ denote an ensemble of $M$ parameters drawn from the prior distribution $\pi_{\text{prior}}$. We sample random noise input $\{u^{j}\}_{j=1}^J \sim \mathbb{U}$, with corresponding model outputs transformed by the generator $G_{\theta_s^m}$ as
 \begin{equation*}
     x^{mj}_s = G_{\theta^m_s}(u^j), \qquad j=1,\dots,J,
  \end{equation*}
for each $\theta^m_s$, $m=1,\dots, M$. Each $\theta^{m}_s \in \mathbb{R}^D$ represents a parameter vector in a $D$-dimensional space, while $\{x^{mj}_s\}_{j=1}^J$ are the corresponding realizations from the parameterized model. Throughout this work, we will adapt the convention that upper indices stand for the ensemble index or the $i$th entry of an $N$-dimensional vector, while lower indices stand for the time index. We start by defining the empirical means and covariances as 
\begin{align}
 \bar{\theta}_s  & = \frac{1}{M} \sum_{m=1}^M \theta^{m}_s, \quad \bar{x}_s^j  = \frac{1}{M} \sum_{m=1}^M x^{mj}_s, \label{eq:means} \\ 
 C_s & = \frac{1}{M} \sum_{m=1}^M (\theta^{m}_s - \bar{\theta}_s) (\theta^{m}_s - \bar{\theta}_s)^\mathrm T, \label{eqn:c_theta} \\
C^{\theta x^j}_s &= \frac{1}{M} \sum_{m=1}^M (\theta^{m}_s - \bar{\theta}_s) (x^{mj}_s - \bar{x}_s^j)^\mathrm T. \label{eqn:c_theta_x}
\end{align}
In the following section, we present a gradient-free interacting particle system which evolves particles $\{\theta^{m}_s\}_{m=1}^M$ over an infinite time horizon to get samples from the approximate posterior $\pi_{\text{MMD}}$ defined in \eqref{eqn:mmd_bayes}.

\subsection{Gradient-free ensemble transform Langevin dynamics}
\label{sec:gf_etld}
In this section, we present an interacting particle system of based on overdamped Langevin dynamics, and its gradient-free approximation for application in generalized Bayesian inference for likelihood-free models. Our approach is inspired by the affine-invariant interacting Langevin dynamics (ALDI) framework \cite{garbuno-inigo_affine_2020}, where empirical covariance structures act as preconditioner to the Langevin dynamics. This design improves efficiency by adapting to the local geometry of the posterior. While the work in \cite{garbuno-inigo_affine_2020} focuses on the $l_2$-loss to sample from the target posterior, we extend this approach for application to general loss functions, specifically the MMD, to sample from the MMD-Bayes posterior. For time $s > 0$, define the ensemble as $\{\theta^{m}_s\}_{m=1}^M$, and associated model outputs  $\{x^{mj}_s\}_{j=1}^J$ for $m = 1, \dots, M$. We introduce a matrix of ensemble deviations 
\begin{equation}
    \theta' = (\theta^1 - \bar{\theta}_s, \theta^2 - \bar{\theta}_s, \ldots, \theta^J - \bar{\theta}_s)
\end{equation}
Thus, rewriting the empirical covariance $C_s$ in a compact form
\begin{equation}
    C_s = \frac{1}{J} \theta' \theta'^{~\rm T}.
\end{equation}
Motivated by the work in \citep{garbuno-inigo_affine_2020}, we choose the $C_s$ as the preconditioning matrix. Thus, for particles $\theta^{m}_s$, and a potential $\mathcal{V}(\theta^{m}_s) \in \mathbb{R}^{D \times M}$, we consider the following type of finite-size IPS 
\begin{equation}\label{eqn:aldi_sde} 
\mathrm{d}\theta_s^{m} = - C_s\nabla_{\theta^m} ~\mathcal{V}(\theta_s^{m})~ \mathrm{d}s + \frac{D+1}{M}(\theta_s^m-\bar{\theta}_s)\, \mathrm{d}s + \sqrt{2}C^{1/2}_s\, \mathrm{d}W_s^{m},
\end{equation}
where $W_s^{m}$ denotes a $D$-dimensional Brownian motion.
\begin{remark}
Following \citep{garbuno-inigo_affine_2020}, in the large ensemble limit $(M>D)$, a generalized non-symmetric square root of covariance matrix $C_s$ can be defined using ensemble deviations as
$$C^{1/2}_s= \frac{1}{\sqrt{J}}\theta'$$
i.e., $C_s = C_s^{1/2}(C_s^{1/2})^{\mathrm T}.$ This generalized square root is particularly useful in high-dimensional settings where computing a Cholesky factorization of $C_s$ can be computationally prohibitive. 
\end{remark}
\begin{remark}
    For a fixed $D$, the correction term $\frac{D+1}{M}(\theta_s^m-\bar{\theta}_s)$ in \eqref{eqn:aldi_sde} is needed to sample from the desired target measure for a small number of particles $M$ (i.e., $M < D$). Moreover, this correction term vanishes as $M \rightarrow \infty$. For a detailed derivation of this interacting particle system, and its mean-field limit, we refer the reader to \citep{garbuno-inigo_affine_2020, nusken_note_2019}.
\end{remark}
For likelihood-free inference with MMD, we choose the potential $\mathcal{V}(\theta_s^{m})$ in \eqref{eqn:aldi_sde} as the negative
log-posterior
\begin{equation}\label{eqn:mmd_potential}
\mathcal{V}(\theta_s^{m}) = -\log \pi_{\text{prior}}(\theta^m) + \beta \, \mathrm{MMD}^2(P_{\theta^m_s}, P_{\text{data}}),
\end{equation}
for $m = 1, \dots, M$. We use the shorthand $\mathrm{MMD}^2(P_{\theta^m_s}, P_{\text{data}}) := \mathrm{MMD}^2(\theta^m_s)$. Thus, the resulting IPS SDE is 
\begin{align}\label{eqn:mmd_aldi_sde}
\mathrm{d}\theta_s^{m} =& - C_s \left(\nabla_{\theta^m} \log \pi_{\text{prior}}(\theta^m)- \beta\nabla_{\theta^m}\mathrm{MMD}^2(\theta^m_s)\right)\, \mathrm{d}s \notag \\ & + \frac{D+1}{M}(\theta_s^m-\bar{\theta}_s) \, \mathrm{d}s + \sqrt{2} C^{1/2}_s\, \mathrm{d}W_s^{m}
\end{align}
for $m = 1, \dots, M,$ where $W_s^{m}$ denotes a $D$-dimensional standard Brownian motion. This evolution requires the evaluation of $\nabla_{\theta^m}\mathrm{MMD}^2(\theta^m_s)$. This expression requires the Jacobian of the forward model \(G_{\theta_s^m}(u^j)\). Computing this Jacobian is often impractical or computationally infeasible for complex simulators, even in the case of a single particle trajectory $\theta_s$. Therefore, to overcome this challenge, we propose a gradient-free approximation for Bayesian inference with MMD, using ensemble covariance statistics, inspired by the EKS \citep{ding_ensemble_2021}, and ALDI frameworks \citep{garbuno-inigo_affine_2020}. These methods approximate parameter-to-output relationships via ensemble-based covariance estimates instead of directly computing model derivatives. As shown in \citep{weissmann_gradient_2022, schillings_analysis_2016}, for general non-linear forward operators, we can express the cross-covariance between parameters and outputs via
\begin{equation}\label{eqn:jacobian_surrogate}
C^{\theta x^j}_s
\approx
C_s \, \nabla_{\theta^m}G_{\theta_s^m}(u^j),
\end{equation}
for $m =1, \dots,M$, where $C^{\theta x^j}_s \in \mathbb{R}^{D \times N}$ is the sample cross-covariance between the parameter ensemble and the model outputs \eqref{eqn:c_theta_x}, and $C_s \in \mathbb{R}^{D \times D}$ is the parameter covariance matrix as defined in \eqref{eqn:c_theta}. 
\begin{remark}
    The approximation \eqref{eqn:jacobian_surrogate} is referred to as \emph{statistical linearization} and can be used to replace the derivative within any sampling (or optimization) method for Bayesian inference. This allows the conversion of even single particle sampling methods for inverse problems, with dependence on the derivative of the forward model into a derivative-free interacting particle system for sampling. Moreover, as shown in \citep{calvello_ensemble_2025, garbuno-inigo_affine_2020}, the statistical linearization ansatz preserves the affine invariance of the IPS \eqref{eqn:mmd_aldi_sde} and is exact in the linear setting, i.e., when $G$ is a linear map then $C^{\theta x^j}_s = C_s \, \nabla_{\theta^m}G_{\theta_s^m}(u^j)$. 
\end{remark}
We introduce an approximation of $\nabla_{\theta^m} \widehat{\operatorname{MMD}}^2(\theta^m_s)$. To compute the gradient with respect to a $\theta^m$, we differentiate only those terms in the empirical estimate $\widehat{\mathrm{MMD}}^2$ defined in \eqref{eqn:mmd2_empirical} that are dependent on $\theta^m$. The gradient can be written as
\begin{align} \label{eqn:grad_mmd2_empirical}
\nabla_{\theta^m} \widehat{\operatorname{MMD}}^2(\theta^m_s)
&=
\frac{2}{J(J-1)} \sum_{\substack{l,j = 1 \\ l \neq j}}^J \nabla_{\theta^m} G_{\theta^m_s}(u^j) \nabla_{x^{mj}} k(x^{mj}_s, x^{ml}_s)\notag \\ & \quad -
\frac{2}{JN}  \sum_{j=1}^J \sum_{n=1}^N  \nabla_{\theta^m} G_{\theta^m_s}(u^j) \nabla_{x^{mj}} k(x^{mj}_s, y^n).
\end{align}
Note that a U-statistic approximation for the gradient \eqref{eqn:grad_mmd2_empirical}, over a single particle $\theta$ has been studied in \citep{briol_statistical_2019}, where the gradient is an average over all unordered pairs of samples. Substituting the approximation \eqref{eqn:jacobian_surrogate} in \eqref{eqn:grad_mmd2_empirical}, we have
\begin{align}\label{eqn:approximate_grad_mmd2}
\nabla_{\theta^m} \widehat{\operatorname{MMD}}^2(\theta^m_s)
&=
C_s^{-1}
\Bigg(
\frac{2}{J(J-1)} \sum_{\substack{l,j = 1 \\ l \neq j}}^J  C^{\theta x^j}_s\nabla_{x^{mj}} k(x^{mj}_s, x^{ml}_s) \notag \\
& \quad -
\frac{2}{J N}  \sum_{j=1}^J  \sum_{n=1}^N  C^{\theta x^j}_s \nabla_{x^{mj}} k(x^{mj}_s, y^n)
\Bigg) \notag \\
& = C_s^{-1} ~ g^{mj}_s,
\end{align}
where 
\begin{equation}\label{eqn:g_cross_cov}
g^{mj}_s := \frac{2}{J(J-1)} \sum_{\substack{l,j = 1 \\ l \neq j}}^J C^{\theta x^j}_s\nabla_{x^{mj}} k(x^{mj}_s, x^{ml}_s)
- \frac{2}{JN}  \sum_{j=1}^J  \sum_{n=1}^N C^{\theta x^j}_s\nabla_{x^{mj}} k(x^{mj}_s, y^n).
\end{equation}
Combining the resulting gradient approximation \eqref{eqn:approximate_grad_mmd2} with the Langevin dynamics \eqref{eqn:mmd_aldi_sde}, results in the gradient-free ensemble transform Langevin dynamics (GF-ETLD) formulation given by the interacting particle system 
\begin{align}\label{eqn:grad_free_ips}
\mathrm{d}\theta_s^{m} =& - \Big(C_s \nabla_{\theta^m} \log \pi_{\text{prior}}(\theta^m) - \beta ~ g^{mj}_s\,\Big) \mathrm{d}s \notag  \\ 
&  + \frac{D+1}{M}(\theta_s^m-\bar{\theta}_s) \, \mathrm{d}s + \sqrt{2} ~C^{1/2}_s\, \mathrm{d}W_s^{m} ,
\end{align}
for $m = 1, \dots, M,$ where $W_s^{m}$ denotes a $D$-dimensional standard Brownian motion, and $g_s^{mj}$ is defined in \eqref{eqn:g_cross_cov}. With a fixed step size $\Delta s > 0$, we use the Euler-Maruyama method to discretize the formulation \eqref{eqn:grad_free_ips}. This defines a gradient-free Langevin-type IPS which targets the posterior distribution \eqref{eqn:mmd_bayes} as $s \rightarrow \infty$.
Following the discussion in \citep{garbuno-inigo_affine_2020}, we now prove that the gradient-free ensemble transform Langevin dynamics \eqref{eqn:grad_free_ips} is invariant under affine transformations of the model parameters. Affine invariance in this context means that, under any invertible affine reparameterization of the parameter space, the particle system retains its form and evolution law.

\begin{lemma}[Affine invariance of GF-ETLD] The interacting particle-system \eqref{eqn:mmd_aldi_sde} and its gradient-free formulation \eqref{eqn:grad_free_ips} are affine invariant. That is, under the invertible affine change of variables $\theta = A  \tilde{\theta}  + b$, the gradient-free IPS for $\{ \tilde{\theta} _s^m\}_{m=1}^M$ follows
\begin{align}
\mathrm{d}\tilde{\theta}_s^{\,m} &= - \tilde{C}_s \nabla_{ \tilde{\theta} ^m} \log \tilde{\pi}_{\mathrm{prior}}(\tilde{\theta}^m) \,\mathrm{d}s 
\ - \ \beta\, \, \tilde{g}^{mj}_s \,\mathrm{d}s  \nonumber \\
&\quad + \frac{D+1}{M} \big(\tilde{\theta}_s^m - \bar{ \tilde{\theta} }_s \big) \,\mathrm{d}s 
+ \sqrt{2} \, \tilde{C}^{1/2}_s \, \mathrm{d}W_s^m,
\label{eqn:gfetld-v-affine}
\end{align}
\end{lemma}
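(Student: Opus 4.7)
The plan is to apply the invertible affine change of variables $\theta = A\tilde{\theta} + b$ to the SDE \eqref{eqn:grad_free_ips} term by term and verify that, after multiplication by $A^{-1}$ from the left, the resulting equation for $\tilde{\theta}_s^m$ takes exactly the form \eqref{eqn:gfetld-v-affine}. The overall strategy rests on the observation that every term on the right-hand side of \eqref{eqn:grad_free_ips} transforms as a left-action by $A$, so that the inverse $A^{-1}$ acts uniformly across drift, correction, and diffusion contributions.

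First I would track how the ensemble-level statistics transform. The relation $\theta_s^m = A\tilde{\theta}_s^m + b$ immediately gives $\bar{\theta}_s = A\bar{\tilde{\theta}}_s + b$ and $\theta_s^m - \bar{\theta}_s = A(\tilde{\theta}_s^m - \bar{\tilde{\theta}}_s)$, from which \eqref{eqn:c_theta} yields $C_s = A\tilde{C}_s A^T$. Since the data-space outputs $x_s^{mj} = G_{\theta_s^m}(u^j)$ are unchanged under the natural identification $\tilde{G}_{\tilde{\theta}}(u) = G_{A\tilde{\theta}+b}(u)$, the means $\bar{x}_s^j$ are invariant and \eqref{eqn:c_theta_x} gives the cross-covariance identity $C_s^{\theta x^j} = A\,\tilde{C}_s^{\tilde{\theta}x^j}$. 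Using the generalized ensemble square root from the remark, $C_s^{1/2} = \frac{1}{\sqrt{J}}\theta' = A\,\tilde{C}_s^{1/2}$ holds pathwise, so the diffusion term transforms cleanly as $\sqrt{2}\,C_s^{1/2}\,dW_s^m = A \cdot \sqrt{2}\,\tilde{C}_s^{1/2}\,dW_s^m$, while the ensemble-spread correction satisfies $\tfrac{D+1}{M}(\theta_s^m - \bar{\theta}_s) = A \cdot \tfrac{D+1}{M}(\tilde{\theta}_s^m - \bar{\tilde{\theta}}_s)$.

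Next I would handle the drift contributions. Under the pushforward, the prior transforms as $\tilde{\pi}_{\text{prior}}(\tilde{\theta}) = |\det A|\,\pi_{\text{prior}}(A\tilde{\theta}+b)$, so by the chain rule $\nabla_{\tilde{\theta}} \log \tilde{\pi}_{\text{prior}}(\tilde{\theta}) = A^T \nabla_\theta \log \pi_{\text{prior}}(\theta)$; combined with $C_s = A\tilde{C}_s A^T$ this gives $C_s \nabla_\theta \log \pi_{\text{prior}} = A\,\tilde{C}_s \nabla_{\tilde{\theta}} \log \tilde{\pi}_{\text{prior}}$. For the MMD term $g_s^{mj}$ from \eqref{eqn:g_cross_cov}, the kernel gradients $\nabla_{x^{mj}} k(x_s^{mj}, \cdot)$ are functionals only of the invariant data-space quantities, so the cross-covariance identity immediately yields $g_s^{mj} = A\,\tilde{g}_s^{mj}$. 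Assembling all pieces, \eqref{eqn:grad_free_ips} reads $A\,d\tilde{\theta}_s^m$ on the left and $A$ times the right-hand side of \eqref{eqn:gfetld-v-affine} on the right; multiplying by $A^{-1}$ yields the claim. The gradient-based version \eqref{eqn:mmd_aldi_sde} is handled identically, using $\nabla_\theta \mathrm{MMD}^2(\theta) = A^{-T} \nabla_{\tilde{\theta}} \mathrm{MMD}^2(\tilde{\theta})$ in place of the statistical linearization identity.

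The step I expect to require the most care is the diffusion term. Using the generalized non-symmetric ensemble square root makes $C_s^{1/2} = A\,\tilde{C}_s^{1/2}$ hold pathwise for a shared Brownian motion $W_s^m$, so that invariance is trajectorywise rather than only in distribution; if a symmetric square root were used instead, one would have to re-randomize the Brownian increment via an orthogonal transformation, and affine invariance would degrade to equivalence in law. A short remark clarifying this point, together with the prior's change-of-variables identity and the data-space invariance of the kernel evaluations, should suffice to close the argument cleanly.
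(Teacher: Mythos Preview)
Your proposal is correct and follows essentially the same approach as the paper: both establish the transformation identities $C_s = A\tilde{C}_s A^{\mathrm T}$, $C_s^{1/2} = A\tilde{C}_s^{1/2}$, $C_s^{\theta x^j} = A\,\tilde{C}_s^{\tilde{\theta} x^j}$, and the chain rule for the prior gradient, then substitute into \eqref{eqn:grad_free_ips} and cancel the common factor of $A$. Your write-up is in fact more detailed than the paper's, particularly in explicitly noting the data-space invariance of $x_s^{mj}$ and the kernel gradients, and in discussing why the ensemble-based square root yields pathwise (rather than merely distributional) affine invariance.
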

\begin{proof}
Let $\theta \in \mathbb{R}^D$ be the model parameter vector, and consider an affine transformation of the form
\begin{equation}
    \theta = A  \tilde{\theta}  + b, 
\end{equation}
for any invertible $A \in \mathbb{R}^{D \times D}$, and any shift $b \in \mathbb{R}^D$. The induced PDFs in $ \tilde{\theta} $-coordinates are defined as
\begin{equation}
    \tilde{\pi}_{s}( \tilde{\theta} ) := |\det A|~ \pi_{s}(A  \tilde{\theta}  + b) .
\end{equation}
For particles $\{\theta_s^m\}_{m=1}^M$, we define the transformed ensemble $\{\tilde{\theta}_s^m\}_{m=1}^M$ by $\tilde{\theta}_s^m := A^{-1}(\theta_s^m - b)$.
Thus, we have
$$C_s = A \tilde{C}_s A^{\mathrm T} \quad \text{and} \quad C^{1/2}_s = A  \tilde{C}^{1/2}_s.$$
Moreover, for functions $\tilde{h}(\tilde{\theta}_s^m) = h(\theta_s^m) = h(A\tilde{\theta}_s^m+b)$, we get
\begin{equation}
    \nabla_{ \tilde{\theta} ^m}~\tilde{h}(\tilde{\theta}_s^m)=  \nabla_{ \tilde{\theta} ^m}h(A\tilde{\theta}_s^m+b)= A^{\mathrm T} \nabla_{\theta^m}h(\theta_s^m) 
\end{equation}
Furthermore, 
\begin{equation}
    C^{\theta x^j}_s = A ~C^{ \tilde{\theta}  x^j}_s,
\end{equation}
for $j=1,\dots,J$. Substituting these in \eqref{eqn:grad_free_ips}, we get \eqref{eqn:gfetld-v-affine}, hence establishing the affine invariance of proposed GF-ETLD method. The result also holds for the gradient-based IPS \eqref{eqn:mmd_aldi_sde}.
\end{proof}
In Section \ref{sec:experiments}, we present a series of numerical experiments that illustrate the method's effectiveness in well-specified models, its robustness under model misspecification, and its computational advantages over existing gradient-based techniques.

\section{Numerical experiments}
\label{sec:experiments}
In this section, we present a set of numerical experiments designed to investigate the performance of the proposed gradient-free ensemble transform Langevin dynamics method for generalized Bayesian inference in both well-specified and misspecified settings. We consider three different types of generative models to assess the performance, robustness, and computational efficiency of the method. First, we evaluate the method on a stochastic dynamical system where the explicit likelihood is intractable. We then proceed to misspecified settings involving synthetic location models, both Gaussian and uniform, to assess robustness against data contamination. Across all cases, we benchmark the performance of gradient-free ensemble transform methods against existing baselines.

\subsection{Well-specified case}
This section focuses on the well-specified setting, which allows us to investigate the method under idealized conditions and isolate the performance of the GF-ETLD algorithm without the added challenge of model misspecification. For this purpose, we choose the  stochastic version of the Lorenz96 model due to its chaotic dynamics, high dimensionality, and practical relevance in atmospheric modeling. This experiment serves as a benchmark for assessing the method's ability to accurately recover latent parameters in complex dynamical systems.
\subsubsection{Stochastic Lorenz96 model}
In this experiment, we investigate the application of the GF-ETLD method for generalized Bayesian inference on a parameterized stochastic Lorenz96 model. The Lorenz96 system \citep{hagedorn_predictability_2006} is a classical testbed in data assimilation and dynamical systems, often used to model nonlinear chaotic dynamics found in atmospheric systems, including slow and fast variables. Here, we use a modified stochastic version, introduced in \citep{wilks_effects_2005}, where the effect of fast variables on the slow variables is approximated by a stochastic effective term depending on a set of parameters. Here, the true likelihood is unaccessible, thus, sampling from the exact posterior is not possible.

We consider the stochastic variant of the Lorenz96 model given by the following system of coupled differential equations
\begin{equation}\label{eqn:stochastic_lorenz96}
\frac{dy_k}{dt} = -y_{k-1}(y_{k-2} - y_{k+1}) - y_k + F - g_k, \quad \text{for } k = 1, \dots, K,
\end{equation}
where \( y_k(t) \) is the state of the \(k\)-th variable at time \(t\), and the indices are cyclic, i.e., $y_{K+1} = y_1$, $y_{K+2}= y_2$, and so on. The stochastic forcing term \(g(y_k, t; \theta)\) depends on a parameter vector \( \theta = (b_0, b_1, \sigma_e, \phi) \), models the impact of unresolved fast-scale processes on each slow variable $y_k(t)$ and is defined upon discretizing the differential equations with step-size $\Delta t$ as
\begin{equation}
g(y_k, t; \theta) = b_0 + b_1 y_k + \phi \, r(t - \Delta t) + \sigma_e \sqrt{1 - \phi^2} \, \eta(t),
\end{equation}
where \( \theta = (b_0, b_1, \phi, \sigma_e) \) are unknown parameters. The residual process \( r(t) \) follows a discrete-time autoregressive process of order one (AR(1)) given by
\begin{equation}
r(t) = \phi r(t - \Delta t) + \sigma_e \sqrt{1 - \phi^2} \, \eta(t),
\label{eq:ar1}
\end{equation}
where \( \eta(t) \sim \mathcal{N}(0,1) \) is i.i.d. standard Gaussian noise, \( \phi \) is the autoregressive coefficient, and \( \sigma_e > 0 \) controls the magnitude of the stochastic term. The stochastic forcing enters each slow variable equation identically but independently through the shared AR(1) process, thus coupling the noise with the state dynamics.

Our goal is to infer parameters $\theta = (b_0, b_1, \phi, \sigma_e) \in \mathbb{R}^4$. We fix $K = 8$ variables, choose $F=10$, and numerically integrate the model using 4th order Runge-Kutta method over a time interval $[0, T]$, with $T = 2.5$, using a fixed time step $\Delta t = 3/40$, resulting in $K = T/ \Delta t = 33$ time steps, and data dimension $33 \times 8 = 264$. We simulate from the stochastic Lorenz96 model using the true parameters $\theta^* = (2.0, 0.8, 0.9, 1.7)$ to obtain the observed data from this system. Our goal is to sample from the MMD-Bayes posterior over $\theta$ given the observed trajectory using the GF-ETLD method. We initialize ensemble of particles $\{\theta^m\}_{m=1}^M$ from the prior $\pi_{\text{prior}} = \mathcal{N}([1.0, 0.0, 0.0, 1.0], (2\operatorname{I}, \operatorname{I}, 2\operatorname{I}, \operatorname{I}))$ and run the GF-ETLD sampler with $M = 200$, and step size $\Delta s = 1e{-3}$. 

To quantitatively assess the accuracy of parameter inference, we compute the root mean square error (RMSE) between the posterior mean and the true parameter values for each of the four parameters \( \theta = (b_0, b_1, \phi, \sigma_e) \). We compare the RMSE obtained using the proposed GF-ETLD method against the adaptive stochastic gradient Langevin dynamics (adSGLD) approach based on the kernel score, as proposed by \cite{pacchiardi_generalized_2024}. This method uses stochastic gradients of a kernel scoring rule to perform posterior sampling. Table \ref{tab:lorenz96_rmse} summarizes the RMSE for both methods and parameters. The results demonstrate that GF-ETLD achieves lower or comparable RMSE to adSGLD across all parameters, highlighting its performance in the likelihood-free setting of the stochastic Lorenz96 model. However, due to its dependence on gradient evaluation of the generator, the adSGLD method is computationally more intensive, a point we examine further in Section \ref{sec:computational_time}.


\begin{table*}[t]
\begin{tabular}{@{}lcccc@{}}
\hline
\textbf{Method} & \multicolumn{1}{c}{\textbf{\(b_0\)}}
& \multicolumn{1}{c}{\textbf{\(b_1\)}} & \multicolumn{1}{c}{\textbf{\(\phi\)}}
& \multicolumn{1}{c}{\textbf{\(\sigma_e\)} }  \\
\hline
GF-ETLD         & 0.0081 & 0.0011 & 0.0021 & 0.0199 \\
adSGLD          & 0.0023 & 0.0016 & 0.0035 & 0.0074 \\
\hline
\end{tabular}
\caption{RMSE of posterior means for each parameter in the stochastic Lorenz96 model.}
\label{tab:lorenz96_rmse}
\end{table*}
These results highlight the effectiveness of GF-ETLD for efficient likelihood-free inference on stochastic dynamical systems, particularly in scenarios where gradients of the simulator are unavailable or computationally prohibitive. This experiment underscores the method's practical suitability as an inference tool for dynamical systems characterized by differential equations with embedded noise processes and intractable likelihoods.

\subsection{Misspecified case}
We now investigate the more challenging case of model misspecification, where the observed data does not perfectly conform to the assumed data generating process. This is typical in real-world settings, where data contamination and distributional shifts are common. Our goal is to assess the robustness of GF-ETLD under such conditions. We use two small scale yet insightful generative models: (i) a Gaussian location model and (ii) a uniform location model, to illustrate the performance of GF-ETLD in the presence of varying levels of data contamination. We examine whether the method can maintain accurate inference even when the likelihood is misaligned with the true data distribution.

\subsubsection{Gaussian models as generative models} 
We evaluate the performance of the proposed gradient-free ensemble transform Langevin dynamics method for sampling from the MMD-Bayes posterior. This experiment focuses on a univariate \textit{normal location model} where the mean parameter $\theta$ is unknown and the standard deviation $\sigma = 1$ is fixed.

To generate data, we define the forward model $G_\theta(u) = \theta + u$, where $u \sim \mathcal{N}(0,1)$, so that $x^j = G_\theta(u^j) \sim \mathcal{N}(\theta, 1)$, for $j=1,\dots,J$. We examine robustness to model misspecification by considering data contamination. Specifically, $N = 150$ samples are generated such that a proportion $1 - \epsilon$ are drawn from $\mathcal{N}(\theta^*, 1)$ with true $\theta^* = 0$, and a proportion $\epsilon$ are drawn from $\mathcal{N}(z, 1)$ with $z = 10$. The contamination level $\epsilon \in \{0, 0.1, 0.2\}$ controls the degree of outlier presence in the data.

We compare three inference methods on this problem. First, we compute the standard Bayesian posterior using the exact likelihood and a Gaussian prior $\theta \sim \mathcal{N}(2,1)$. This serves as a baseline for ideal conditions but is expected to degrade with increase in contamination. Second, we use the adSGLD approach based on the kernel score \cite{pacchiardi_generalized_2024}. Finally, we apply our GF-ETLD sampler, where an ensemble of parameters $\{\theta^m_0\}_{m=1}^M$ is initialized from the prior and evolved using the interacting Langevin dynamics derived in Section \ref{sec:gf_etld}. We use $M = 10$ particles, a step size of $\Delta s = 10^{-3}$, and evolve the system for $12$ time steps.

Figure \ref{fig:ms_location_posteriors} shows the resulting posterior densities across the different outlier contamination levels. We observe that the GF-ETLD method remains robust across increasing contamination levels. As $\epsilon$ increases from 0.0 to 0.2, the GF-ETLD posteriors continue to concentrate around the true mean $\theta^* = 0$. This demonstrates the ability of the method to downweight the influence of outliers in the data-generating process. The adSGLD method performs comparably to GF-ETLD but yields slightly more dispersed posterior distributions. Moreover, adSGLD incurs higher computational cost, as it requires an increasing number of forward simulations per iteration--a detail addressed in the following section. In contrast, the standard Bayesian posterior, is highly sensitive to contamination. Even at a mild to moderate contamination level of $\epsilon = 0.1, 0.2$, the posterior moves significantly away from the true value, and fails to capture the true parameter, reflecting its sensitivity to model misspecification. These results highlight the value of using ensemble-based methods in robust likelihood-free inference.

\begin{figure}
    \centering
    \includegraphics[width=\textwidth]{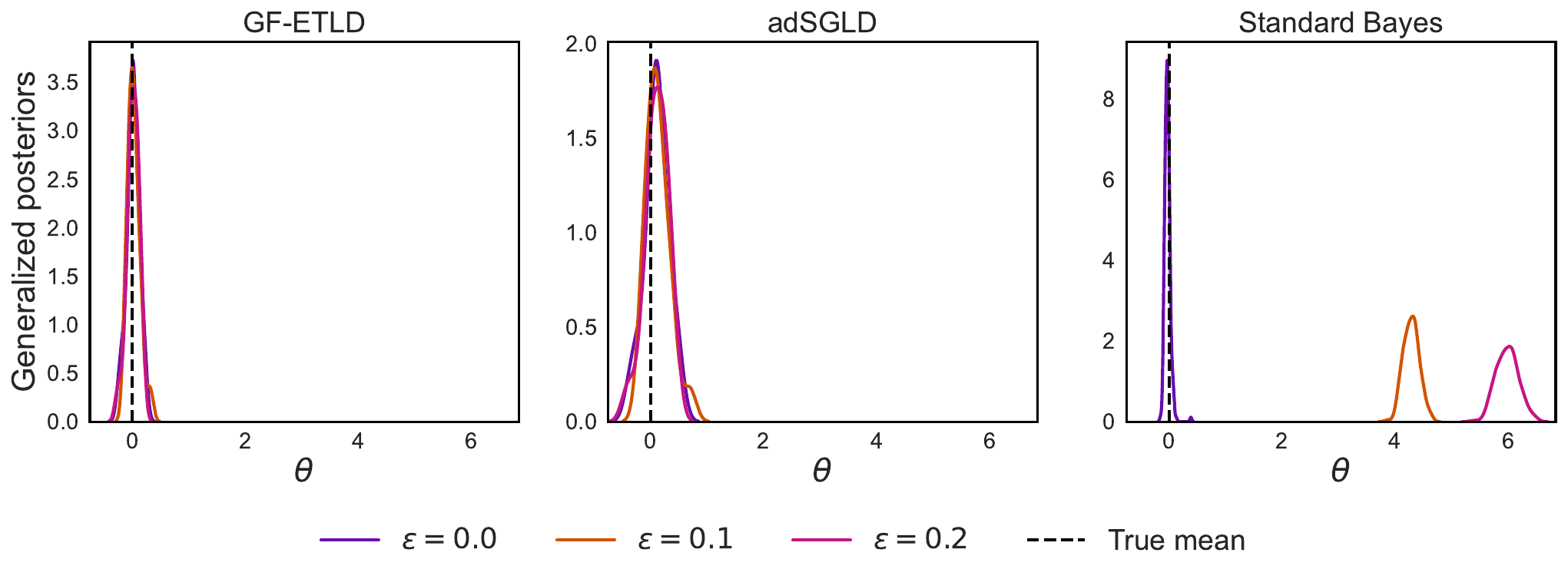}
    \caption[Generalized posteriors of the misspecified normal location model under increasing contamination levels]{Posterior densities under increasing contamination levels $\epsilon=0.0,0.1,0.2$. The vertical dashed line indicates the true parameter value $\theta^* = 0$. Left: GF-ETLD (MMD-Bayes), middle: adSGLD (kernel-score), right: standard Bayes posterior.}
    \label{fig:ms_location_posteriors}
\end{figure}
The robustness of the GF-ETLD method is further demonstrated in Figure \ref{fig:ms_gf_loc_rmse}, which displays the root mean squared error (RMSE) between the posterior mean and the true location parameter \(\theta^* = 0\) as a function of the outlier proportion \(\epsilon\). We observe that the RMSE remains low and nearly constant for contamination levels up to \(\epsilon = 0.5\), indicating that the GF-ETLD posterior accurately estimates the true parameter even in the presence of a substantial proportion of outliers. Only when the outlier proportion exceeds \(50\%\) does the estimation error increase noticeably, with a steep rise as \(\epsilon\) approaches \(1\). This finding confirms that the GF-ETLD method performs robust inference under moderate contamination.

\begin{figure}
    \centering
    \includegraphics[width=0.8\linewidth]{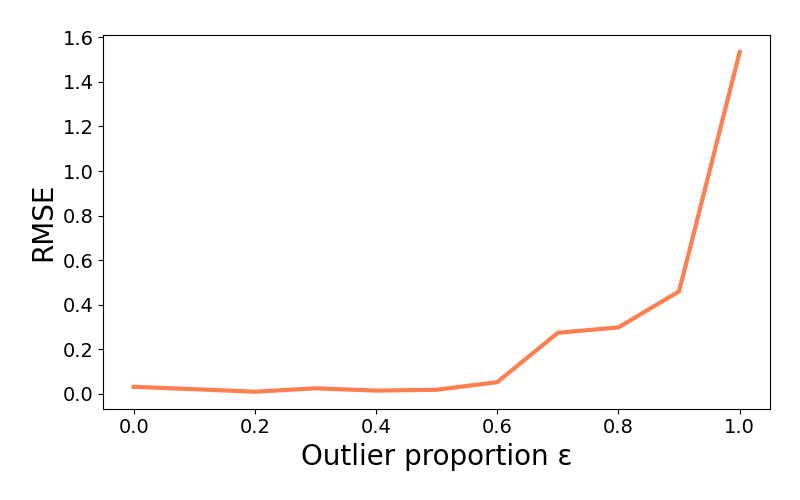}
    \caption[RMSE of GF-ETLD posterior mean in Gaussian location model as a function of outlier proportion]{Root mean squared error of the posterior mean obtained by GF-ETLD for the misspecified Gaussian location model, as a function of the outlier proportion \(\epsilon\).}
    \label{fig:ms_gf_loc_rmse}
\end{figure}
In summary, this experiment highlights the robustness of GF-ETLD to increasing outlier contamination. The method maintains posterior concentration around the true parameter, even under substantial contamination, outperforming standard Bayes and matching adSGLD with lower computational cost. This makes GF-ETLD particularly well-suited for real-world scenarios.

\subsubsection{Uniform location parameter estimation} 
We further examine the behavior of GF-ETLD in a uniformly distributed location model. This setting deviates further from the Gaussian assumptions typically used in Bayesian inference. In this experiment, we estimate the location parameter \(\theta\) of a uniform distribution \(\mathcal{U}([\theta - 1.0, \theta + 1.0])\), with the true value \(\theta^* = 1\). We generate \(N = 100\) i.i.d. observations, and following the misspecified setting in the previous experiment, we introduce contamination by replacing a proportion \(\epsilon \in [0, 0.4]\) of the samples with outliers drawn from a Gaussian distribution \(\mathcal{N}(10, 1)\). We compare the performance of the gradient-free ensemble transform Langevin dynamics method with the standard Bayesian posterior by computing the RMSE between the posterior mean and the true parameter \(\theta^*\). Since the adSGLD method yields results comparable to GF-ETLD, we do not report it explicitly in this setting. 
\begin{figure}
    \centering
    \includegraphics[width=0.8\linewidth]{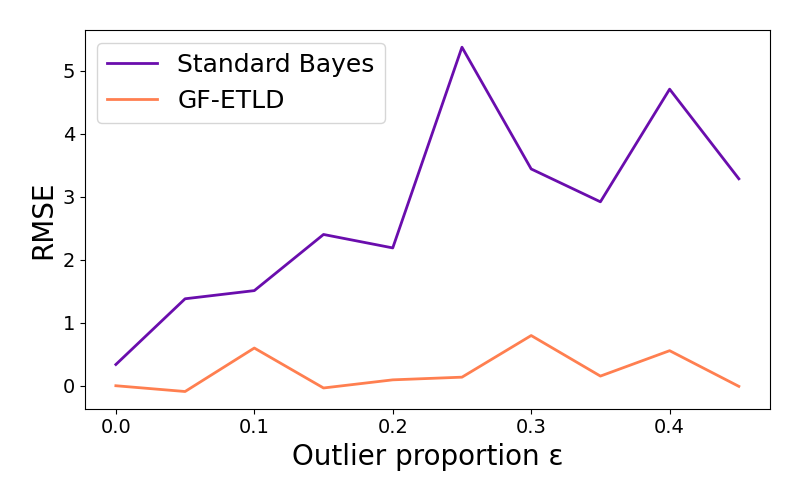}
    \caption[RMSE of GF-ETLD, and standard Bayes posterior estimates in uniform location model under increasing outlier contamination]{Root mean squared error of posterior estimates obtained by GF-ETLD, and standard Bayes approach for the uniform location parameter under increasing proportions of contamination with outliers $\epsilon$}
    \label{fig:ms_uniform_rmse}
\end{figure}
Figure \ref{fig:ms_uniform_rmse} displays the RMSE across increasing values of \(\epsilon\). We observe that the standard Bayesian posterior is highly sensitive to even small proportions of outliers, exhibiting a sharp rise in RMSE. In contrast, the GF-ETLD method maintains significantly lower error throughout, demonstrating robustness to contamination and highlighting its effectiveness in misspecified settings.

This experiment reinforces the robustness advantage of GF-ETLD. The consistent performance of GF-ETLD in both Gaussian and uniform misspecified settings affirms its general applicability to a broad class of likelihood-free inference problems.

\subsection{Computational time}
\label{sec:computational_time}
Finally, we assess the computational efficiency of the GF-ETLD method in comparison with adSGLD and standard Bayesian inference. This section will show that GF-ETLD not only achieves competitive or superior inference accuracy and robustness, but also does so at a lower computational cost. This property is crucial in high-dimensional or simulation-based models where gradient evaluations or large ensemble sizes are computationally burdensome. 

We provide the computational time recorded for each experiment over multiple independent runs. The Gaussian location, and uniform location estimation experiments were repeated $L= 10$ times, and the stochastic Lorenz96 experiment $5$ times. For each run, a new dataset was generated and $L$ posterior samples were obtained for $\theta$. The average time for the generation of $L$ samples is recorded for each method in Table \ref{tab:clock_time}. 

\begin{table*}[t]
\begin{tabular}{@{}lccc@{}}
\hline
\textbf{Method} & \multicolumn{1}{c}{\textbf{Gaussian}}
& \multicolumn{1}{c}{\textbf{Uniform}} & \multicolumn{1}{c}{\textbf{stochastic Lorenz96}} \\
\hline
GF-ETLD         & 1.244 & 0.016 & 1.515 \\
adSGLD          & 8.586 & 6.169 & 55.416 \\
\hline
\end{tabular}
\caption{Average clock time in seconds.}
\label{tab:clock_time}
\end{table*}
The results in Table \ref{tab:clock_time} highlight the computational efficiency of the proposed GF-ETLD method. Across all three experimental setups, GF-ETLD consistently requires significantly less time than adSGLD to generate posterior samples, with improvements spanning one to two orders of magnitude. Consequently, GF-ETLD presents a practical and scalable inference tool for likelihood-free problems, particularly in scenarios where simulator gradients are unavailable, expensive to compute, or where the number of forward simulations must be minimized.

\section{Conclusions}
In this paper, we proposed a gradient-free ensemble transform Langevin dynamics (GF-ETLD) algorithm for generalized Bayesian inference based on the maximum mean discrepancy. By leveraging ensemble-based covariance structures, the approach performs likelihood-free inference without simulator gradients or explicit likelihood evaluations. Empirical studies on the stochastic Lorenz96 system and on contaminated Gaussian and uniform location models demonstrate that GF-ETLD recovers accurate posterior distributions while remaining robust to outliers and model misspecification. Across these benchmarks, it matches or exceeds the accuracy of gradient-based samplers at substantially lower computational cost. These findings position GF-ETLD as a practical and scalable tool for likelihood-free inference in high-dimensional generative models, offering a balance of statistical robustness and computational efficiency. Future research may establish formal theoretical guarantees, explore further scalability via subsampling, localization, or dropout strategies, and extend the framework to a wider range of real-world generative models, including multimodal target distributions, which remains largely unexplored.

\printbibliography

\end{document}